\newtheorem{corollary}{Corollary}
\newtheorem{lemma}{Lemma}
\def \M {{\mathcal M }}
\def \L {{\mathcal{L}}}
\def \figtab{14cm}
\def \figspace{\vspace*{-5mm}}
\begin{document}
\title{\Large Outage Probability in \mbox{$\eta$-$\mu$/\mbox{$\eta$-$\mu$} Interference-limited Scenarios}
\thanks{*J. F. Paris is with the Dept. Ingenier\'ia de Comunicaciones, Universidad de M\'alaga, M\'alaga, Spain.
This work is partially supported by the Spanish Government under
project TEC2011-25473 and the European Program FEDER.}
\thanks{**This work has been submitted to the IEEE for possible
publication. Copyright may be transferred without notice, after
which this version may no longer be accessible.}}

\author{
\vspace{-3mm}
\authorblockN{\normalsize Jos\'e F. Paris}
\vspace{-15mm}}


\maketitle

\begin{abstract}
In this paper exact closed-form expressions are derived for the
outage probability (OP) in scenarios where both the signal of
interest (SOI) and the interfering signals experience $\eta$-$\mu$
fading and the background noise can be neglected. With the only
assumption that the $\mu$ parameter is a positive integer number
for the interfering signals, the derived expressions are given in
elementary terms for maximal ratio combining (MRC) with
independent branches. The analysis is also valid when the $\mu$
parameters of the pre-combining SOI power envelopes are positive
integer or half-integer numbers and the SOI is formed at the
receiver from spatially \mbox{correlated MRC.}
\end{abstract}

\vspace{-1mm}
\begin{keywords}
Cochannel interference (CCI), outage
probability, $\eta$-$\mu$ fading, maximal ratio combining (MRC)
\end{keywords}

\vspace{-4mm}

\IEEEpeerreviewmaketitle

\section{Introduction}

\PARstart{T}{he} $\eta$-$\mu$ fading distribution has been
proposed to model a general non-line-of-sight (NLOS) propagation
scenario. By two shape parameters $\eta$ and $\mu$, this model
includes some classical fading distributions as particular cases,
e.g. Nakagami-$q$ (Hoyt), one-sided Gaussian, Rayleigh and
Nakagami-$m$. The fitting of the $\eta$-$\mu$ distribution to
experimental data is better than that achieved by the classical
distributions previously mentioned. A detailed description of the
$\eta$-$\mu$ fading model can be found \mbox{in \cite{Yacoub07}.}

This paper focuses on outage probability (OP) analysis for
wireless communications systems where both the signal of interest
(SOI) and the cochannel interference (CCI) signals experience
$\eta$-$\mu$ fading and the background noise can be neglected. The
OP analysis for $\eta$-$\mu$ fading channels in which only the
background noise is present was recently published in
\cite{Paris10}. \mbox{A detailed} account of OP analysis for
interference-limited systems can be found in \mbox{\cite[chapter
10]{Simon05}} and references therein. Specifically, in
\mbox{\cite[eq. 10.17]{Simon05}} a closed-form expression is
derived for the OP in the \mbox{Nakagami-$m$/Nakagami-$m$}
interference-limited scenario assuming independent and identically
distributed (i.i.d) receive maximal ratio combining (MRC) for the
SOI, i.i.d interfering signals, and certain restrictions for the
values of the Nakagami-$m$ parameters of both the SOI and the
interferers. In fact, this expression for the
\mbox{Nakagami-$m$/Nakagami-$m$} scenario includes, as particular
cases, several classical results derived in literature
\mbox{\cite{Sowerby88}-\cite{Tellambura95}}. More recently, a
further generalization for the \mbox{Nakagami-$m$/Nakagami-$m$}
interference-limited scenario was presented \mbox{in
\cite{Romero07},} where a closed-form expression was derived in
\cite[eq. 18]{Romero07} for i.i.d MRC with the only restriction
that the product of the $m$ parameter and the number of MRC
branches for the SOI is a positive integer number. However, to the
best of the author's knowledge, no closed-form results in
elementary terms for the \mbox{$\eta$-$\mu$/$\eta$-$\mu$}
interference-limited scenario, which consequently generalize
current results for the \mbox{Nakagami-$m$/Nakagami-$m$} scenario,
are found in literature.

In this paper, we derive exact closed-form expressions for the OP
of the \mbox{$\eta$-$\mu$/$\eta$-$\mu$} interference-limited
scenario in elementary terms, with the only assumption that the
$\mu$ parameter is a positive integer number for the interfering
signals. Such analysis is further extended to scenarios in which
the SOI is formed from spatially correlated MRC. In connection
with the \mbox{Nakagami-$m$/Nakagami-$m$} scenario, the derived
expressions complement \cite[eq. 10.17]{Simon05} and \cite[eq.
18]{Romero07} as long as they are valid if real values of $m$ are
assumed for the pre-combining SOI power envelopes or if the SOI is
formed from spatially correlated MRC. In addition, the presented
OP analysis includes other interesting scenarios, e.g. the
$\eta$-$\mu$/Rayleigh interference-limited scenario with no
assumptions on the SOI and the interferers parameters.

The remainder of this paper is organized as follows.
In \mbox{Section \ref{stat}} the key statistical results for $\eta$-$\mu$/$\eta$-$\mu$
random variables (RVs) are derived.
The OP analysis
is discussed in \mbox{Section
\ref{analysis}} and the numerical results in \mbox{Section \ref{resultados}}.
Finally, some conclusions are given in
\mbox{Section \ref{conclusiones}.}

\section{Distribution of the Quotient of Sums of Squared $\eta$-$\mu$ RVs}
\label{stat}

The fundamental results of this paper are expressed in statistical
terms in the next Lemma and its Corollary. Along this paper all
$\eta$-$\mu$ RVs are assumed to be defined by format 1, i.e.
$\eta>0$ and $\mu>0$. For details on $\eta$-$\mu$ RVs, the reader
should refer to \mbox{\cite{Yacoub07}.}

\begin{lemma}
\label{expressionA}
Let $\{X_n\}_{n=1}^N$ and $\{Y_k\}_{k=1}^K$ be \mbox{$N+K$ mutually} independent squared $\eta$-$\mu$ RVs with sets of
parameters, defined according to \mbox{format 1,} given by
$\mathcal{S}_X\equiv\{\Omega_{X_n},\eta_{X_n},\mu_{X_n}\}_{n=1}^N$ and
\mbox{$\mathcal{S}_Y\equiv\{\Omega_{Y_k},\eta_{Y_k},\mu_{Y_k}\}_{k=1}^K$} respectively.
Let us assume that $\mu_{Y_k}$ is a positive integer number for \mbox{$k=1,...,K$.}
Then, the cumulative distribution function (CDF) of the RV
$Z \triangleq {{\sum\limits_{n = 1}^N {X_n } }}/
{{\sum\limits_{k = 1}^K {Y_k } }}$
is\footnotemark[1]
\begin{equation}
\label{eq:expressionA}
\begin{gathered}
  F_Z \left( z \right) = \Theta \left( {z;{\mathcal{S}}_X ,{\mathcal{S}}_Y } \right) \triangleq  - \left( {\prod\limits_{j = 1}^J {\left( { - \beta _j } \right)^{b_j } } } \right)\left( {\prod\limits_{\ell  = 1}^{2N} {\left( {z\alpha _\ell  } \right)^{a_\ell  } } } \right)\sum\limits_{r = 1}^J {\sum\limits_{\mathbf{q} \in \wp _{2N + J}^{b_r  - 1} } {\frac{{( - 1)^{q_1 } }}
{{\beta _r ^{q_1  + 1} }} \times } }  \hfill \\
  \quad \prod\limits_{\ell  = 1}^{2N} {\frac{{( - 1)^{q_{\ell  + 1} } \left( {a_\ell  } \right)_{q_{\ell  + 1} } }}
{{\left( 1 \right)_{q_{\ell  + 1} } \left( {\beta _r  + z\alpha _\ell  } \right)^{a_\ell   + q_{\ell  + 1} } }}} \prod\limits_{j = 1}^{r - 1} {\frac{{( - 1)^{q_{_{j + 2N + 1} } } \left( {b_j } \right)_{q_{j + 2N + 1} } }}
{{\left( 1 \right)_{q_{j + 2N + 1} } \left( {\beta _r  - \beta _j } \right)^{b_j  + q_{j + 2N + 1} } }}} \prod\limits_{j = r + 1}^J {\frac{{( - 1)^{q_{_{j + 2N} } } \left( {b_j } \right)_{q_{j + 2N} } }}
{{\left( 1 \right)_{q_{j + 2N} } \left( {\beta _r  - \beta _j } \right)^{b_j  + q_{j + 2N} } }}},  \hfill \\
\end{gathered}
\end{equation}

\addtocounter{footnote}{1}\footnotetext{Along this paper it is assumed that $\prod\limits_{k = a}^b {s_k }=1$
when $b<a$.}

\noindent where $(c)_{\ell}$ is the Pochhammer symbol, the sets of coefficients
$\left\{ {a_\ell  } \right\}_{\ell  = 1}^{2N}$ and $\left\{ {\alpha _\ell  } \right\}_{\ell  = 1}^{2N}$
are defined from $\mathcal{S}_X$
as follows
\begin{equation}
\label{parame1}
\left\{ \begin{gathered}
  a_{2n - 1}  \triangleq a_{2n}  \triangleq \mu _{X_n } \hspace{23mm} \left( {n = 1, \ldots ,N} \right), \hfill \\
  \alpha _{2n - 1}  \triangleq \frac{\mu _{X_n }}
{\Omega_{X_n } }\frac{{2 + \eta _{X_n }  + \eta _{X_n }^{ - 1} }}
{{1 + \eta _{X_n } }}\hspace{8mm} \left( {n = 1, \ldots ,N} \right), \hfill \\
  \alpha _{2n}  \triangleq \eta _{X_n } \frac{\mu _{X_n }}
{\Omega_{X_n } }\frac{{2 + \eta _{X_n }  + \eta _{X_n }^{ - 1} }}
{{1 + \eta _{X_n } }}\hspace{6mm} \left( {n = 1, \ldots ,N} \right); \hfill \\
\end{gathered}  \right.
\end{equation}
the sets of coefficients
$\left\{ {b_j } \right\}_{j = 1}^J$ and $\left\{ {\beta _j } \right\}_{j = 1}^J$
are defined from $\mathcal{S}_Y$
as follows: $\left\{ {\beta _j } \right\}_{j = 1}^J$ is the set of different values in the set
of intermediate coefficients
\begin{equation*}
\mathcal{S}_Y^\star\equiv
\left\{ {\omega _k ,\rho _k } \right\}_{k = 1}^K,
\end{equation*}
where
\begin{equation}
\label{inter}
\left\{ \begin{gathered}
  \omega _k  \triangleq \frac{{\mu _{Y_k } }}
{{\Omega _{Y_k } }}\frac{{2 + \eta _{Y_k }  + \eta _{Y_k }^{ - 1} }}
{{1 + \eta _{Y_k } }}\hspace{8mm} \left( {k = 1, \ldots ,K} \right), \hfill \\
  \rho _k  \triangleq \eta _{Y_k } \frac{{\mu _{Y_k } }}
{{\Omega _{Y_k } }}\frac{{2 + \eta _{Y_k }  + \eta _{Y_k }^{ - 1} }}
{{1 + \eta _{Y_k } }}\quad \left( {k = 1, \ldots ,K} \right), \hfill \\
\end{gathered}  \right.
\end{equation}
while $J$ is the number of different values in $\mathcal{S}_Y^\star$
and
\begin{equation}
\label{losb}
b_j  \triangleq \sum\limits_{\begin{subarray}{l}
  \left\{ {k\,:\,\omega _k  = \beta _j ,\omega _k  \in {\mathcal{S}_Y^\star}} \right\} \cup  \\
  \left\{ {k\,:\,\rho _k  = \beta _j ,\rho _k  \in {\mathcal{S}_Y^\star}} \right\}
\end{subarray}}  {\mu _{Y_k } };
\end{equation}
otherwise, $\wp _{2N + J}^{b_r  - 1}$ represents the set of natural partitions of the nonnegative integer ${b_r  - 1}$
in ${2N + J}$ groups, i.e. $\wp _{2N + J}^{b_r  - 1}  \equiv \left\{ {\mathbf{q} = (q_1 ,q_2 , \cdots ,q_{2N + J} ):q_h
\in \mathbb{N}\cup \{0\},\sum\limits_{h = 1}^{2N + J} {q_h }  = b_r  - 1} \right\}$.

\end{lemma}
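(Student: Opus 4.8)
The plan is to work entirely in the Laplace/MGF domain, exploiting the fact that the squared $\eta$-$\mu$ RV has an MGF which factors into two first-order poles. Recall that if $W$ is a squared $\eta$-$\mu$ RV with parameters $\{\Omega,\eta,\mu\}$ (format 1), then its PDF can be written in such a way that its Laplace transform is $\left(\frac{\omega}{\omega+s}\right)^{\mu}\left(\frac{\rho}{\rho+s}\right)^{\mu}$, where $\omega,\rho$ are exactly the intermediate coefficients appearing in \eqref{inter} (with the role of $\Omega$). Hence, by independence, the sum $U \triangleq \sum_{n=1}^{N} X_n$ has Laplace transform $\prod_{\ell=1}^{2N}\left(\frac{\alpha_\ell}{\alpha_\ell+s}\right)^{a_\ell}$ with the $a_\ell,\alpha_\ell$ of \eqref{parame1}, and the sum $V \triangleq \sum_{k=1}^{K} Y_k$ has Laplace transform $\prod_{j=1}^{J}\left(\frac{\beta_j}{\beta_j+s}\right)^{b_j}$, since the integrality of $\mu_{Y_k}$ lets one merge repeated poles and collect the multiplicities exactly as in \eqref{losb}. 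The key structural point is that $V$ now has a PDF that is a finite sum of scaled Erlang densities, which I will obtain by a partial-fraction expansion of its rational Laplace transform.

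The first concrete step is to write $F_Z(z) = \Pr\{U \le z V\} = \int_0^\infty \Pr\{U \le z v\}\, p_V(v)\, dv = \int_0^\infty F_U(zv)\, p_V(v)\, dv$, using independence of $U$ and $V$. The second step is to perform the partial-fraction decomposition of the rational function $\prod_{j=1}^{J}\beta_j^{b_j}/(\beta_j+s)^{b_j}$: because each $\beta_r$ is a pole of order $b_r$, the residue computation at $\beta_r$ requires differentiating the remaining factors $b_r-1$ times, and the general Leibniz rule for a product of $2N-1+ (J-1)$ factors (the $2N$ factors $(\alpha_\ell+\cdot)^{-a_\ell}$ do not appear here — wait, they appear only through $F_U$; in the residue at $\beta_r$ only the other $\beta_j$-factors and the $\beta_r^{-\,\cdot}$ prefactor are differentiated) is precisely what produces the multi-index sum over $\wp_{2N+J}^{b_r-1}$. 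Actually the cleaner route is to keep $F_U(zv)$ inside the integral: writing $F_U(zv)=1-\bar F_U(zv)$ and noting $\bar F_U$ has Laplace transform (in its argument) built from the $2N$ poles $\alpha_\ell$, one computes $\int_0^\infty \bar F_U(zv) p_V(v) dv$ by recognizing it as an MGF-type evaluation; expanding $p_V$ into its Erlang pieces $\frac{\beta_r^{b_r}}{(b_r-1)!} v^{b_r-1} e^{-\beta_r v}$ (times partial-fraction constants), the $v$-integral against $\bar F_U(zv)$ becomes a sum of terms of the form $\left(\beta_r + z\alpha_\ell\right)^{-(a_\ell+q_{\ell+1})}$ after a further Leibniz expansion of the product over $\ell$. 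Matching all the Pochhammer symbols $(a_\ell)_{q_{\ell+1}}$, $(b_j)_{q_{j+2N+1}}$, the factorials $(1)_{q}$, and the signs $(-1)^{q}$ against the bookkeeping of these nested differentiations/expansions gives \eqref{eq:expressionA}.

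The main obstacle is purely combinatorial: keeping the indices of the partition $\mathbf{q}=(q_1,\dots,q_{2N+J})$ aligned with the correct factors. Concretely, $q_1$ counts the power to which the prefactor $\beta_r^{-1}$ gets differentiated (producing $\beta_r^{-(q_1+1)}$ and the sign $(-1)^{q_1}$), the next $2N$ components $q_2,\dots,q_{2N+1}$ distribute the remaining derivatives among the $2N$ factors coming from $\bar F_U$ (producing $(a_\ell)_{q_{\ell+1}}$ and denominators $(\beta_r+z\alpha_\ell)^{a_\ell+q_{\ell+1}}$), and the last $J-1$ components split among the factors $(\beta_r-\beta_j)^{-b_j}$ for $j\ne r$ — which is why the indexing of the product over $j<r$ uses $q_{j+2N+1}$ while the product over $j>r$ uses $q_{j+2N}$ (one index is "skipped" at $j=r$). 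Verifying that the total derivative order summed over all components equals $b_r-1$, so that the multi-index indeed ranges over $\wp_{2N+J}^{b_r-1}$, and checking the overall sign and the leading prefactor $-\left(\prod_j(-\beta_j)^{b_j}\right)\left(\prod_\ell (z\alpha_\ell)^{a_\ell}\right)$, is the delicate part; everything else is a routine Laplace-transform manipulation. As a sanity check at the end I would specialize to $N=K=1$ with $\eta_{X_1}=\eta_{Y_1}=1$ (the Nakagami-$m$/Nakagami-$m$ i.i.d.\ case) and confirm agreement with \cite[eq. 18]{Romero07}.
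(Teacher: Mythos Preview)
Your approach is essentially the paper's: both work in the Laplace/MGF domain, exploit the product-of-first-order-poles structure of the $\eta$-$\mu$ MGF, and obtain the partition sum over $\wp_{2N+J}^{b_r-1}$ from the general Leibniz rule applied to the $(b_r-1)$-th derivative at $p=\beta_r$. The paper packages the calculation as a single Bromwich contour integral of the kernel $\Xi(p)=\tfrac{1}{p}\,\mathcal{M}_{\sum X_n}(-p/z)\,\mathcal{M}_{\sum Y_k}(p)$ closed to the right, so that the residue at each $\beta_r$ differentiates all $2N+J$ factors (the $1/p$, the $2N$ factors from $U$, and the $J-1$ remaining $\beta_j$-factors) in one Leibniz step; your two-stage route (partial-fraction $p_V$ first, then integrate against $F_U$) yields the same expression after recombining the nested sums, and your index bookkeeping for $\mathbf{q}$ --- including the ``skipped'' slot at $j=r$ --- matches the paper's. (One cosmetic point: there is no need to pass to $\bar F_U$; working directly with $\mathcal{L}[F_U;s]=s^{-1}\mathcal{M}_U(-s)$ gives $F_Z$ without a complementary $1-(\cdot)$.)
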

\begin{proof}
See Appendix I.
\end{proof}

Taking into account that the Nakagami-$m$ distribution is obtained
in an exact manner from the $\eta$-$\mu$ distribution in format 1 by setting $\mu=m$ and
$\eta\rightarrow 0$ \cite[Appendix A]{Yacoub07}, the following Corollary is derived from Lemma \ref{expressionA}.

\begin{corollary}
\label{expressionA2}
Let $\{X_n\}_{n=1}^N$ be $N$ squared $\eta$-$\mu$ RVs with set of
parameters, defined according to \mbox{format 1,} given by
$\mathcal{S}_X\equiv\{\Omega_{X_n},\eta_{X_n},\mu_{X_n}\}_{n=1}^N$.
Let $\{Y_k\}_{k=1}^K$ be $K$ squared Nakagami-$m$ RVs with set of
parameters given by $\mathcal{G}_Y\equiv\{\Omega_{Y_k},m_{Y_k}\}_{k=1}^K$.
All these $N+K$ RVs are assumed to be mutually independent.
Let us assume that $m_{Y_k}$ is a positive integer number for $k=1,...,K$.
Then, the CDF of $Z \triangleq{{\sum\limits_{n = 1}^N {X_n } }}/{{\sum\limits_{k = 1}^K {Y_k } }}$
is given by
\begin{equation}
\label{eq:expressionA2}
F_Z \left( z \right) = \mathop {\lim }\limits_{\eta _{Y_1 } ,...,\eta _{Y_k  \to 0} }
\Theta \left( {z;{\mathcal{S}}_X ,\left\{ {\Omega _{Y_k } ,m_{Y_k } ,\eta _{Y_k } } \right\}_{k = 1}^K } \right)
= \widetilde{\Theta} \left( {z;{\mathcal{S}}_X ,{\mathcal{G}}_Y } \right),
\end{equation}
where $\widetilde\Theta$ is a slightly modified version of the function
$\Theta$ which has the same formal expression given in (\ref{eq:expressionA}) but with the only difference that
the sets of coefficients $\left\{ {b_j } \right\}_{j = 1}^J$ and $\left\{ {\beta _j } \right\}_{j = 1}^J$ are
defined from $\mathcal{G}_Y^\star\equiv\left\{ {\rho _k=\frac{{m_{Y_k} }}{{\Omega _{Y_k } }} } \right\}_{k = 1}^K$ (instead of
$\mathcal{S}_Y^\star$).
\end{corollary}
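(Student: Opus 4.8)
The plan is to prove the two equalities in (\ref{eq:expressionA2}) in turn, the first by a weak-convergence argument and the second by evaluating the limit of $\Theta$ factor by factor. For the first equality, fix $\Omega_{Y_k}$ and $\mu_{Y_k}=m_{Y_k}$ and let $Y_k^{(\eta)}$ denote the squared $\eta$-$\mu$ RV (format 1) with those parameters and $\eta$-parameter $\eta_{Y_k}>0$; its moment generating function is $\bigl(\omega_k/(\omega_k+s)\bigr)^{m_{Y_k}}\bigl(\rho_k/(\rho_k+s)\bigr)^{m_{Y_k}}$ with $\omega_k,\rho_k$ as in (\ref{inter}). A short computation from (\ref{inter}) gives $\omega_k=\tfrac{m_{Y_k}}{\Omega_{Y_k}}\tfrac{1+\eta_{Y_k}}{\eta_{Y_k}}\to\infty$ and $\rho_k=\tfrac{m_{Y_k}}{\Omega_{Y_k}}(1+\eta_{Y_k})\to\tfrac{m_{Y_k}}{\Omega_{Y_k}}$ as $\eta_{Y_k}\to0$, so for $s$ near the origin this MGF tends to $\bigl(\tfrac{m_{Y_k}/\Omega_{Y_k}}{m_{Y_k}/\Omega_{Y_k}+s}\bigr)^{m_{Y_k}}$, the MGF of the squared Nakagami-$m$ RV $Y_k$ with parameters $(\Omega_{Y_k},m_{Y_k})$ (cf.\ \cite[Appendix A]{Yacoub07}). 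Hence $Y_k^{(\eta)}\Rightarrow Y_k$; by the assumed mutual independence the vector $(X_1,\dots,X_N,Y_1^{(\eta)},\dots,Y_K^{(\eta)})$ converges in distribution, and since the limiting denominator sum is almost surely positive the continuous mapping theorem gives $Z^{(\eta)}\triangleq(\sum_n X_n)/(\sum_k Y_k^{(\eta)})\Rightarrow Z$. As $Z$ is absolutely continuous on $(0,\infty)$, $F_Z$ is continuous, so $F_{Z^{(\eta)}}(z)\to F_Z(z)$ for every $z$; combining this with Lemma \ref{expressionA} applied to $Z^{(\eta)}$, which gives $F_{Z^{(\eta)}}(z)=\Theta(z;\mathcal{S}_X,\{\Omega_{Y_k},m_{Y_k},\eta_{Y_k}\}_{k=1}^K)$, and letting $\eta_{Y_1},\dots,\eta_{Y_K}\to0$ yields the first equality; since each limit involved is finite the order of the limits is immaterial.

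For the second equality I would evaluate $\lim\Theta$ directly. Split $\{1,\dots,J\}$ into the set $\mathcal{J}_{\mathrm{fin}}$ of indices $j$ whose $\beta_j$ comes from some $\rho_k$ (so $\beta_j\to m_{Y_k}/\Omega_{Y_k}$, finite) and the set $\mathcal{J}_\infty$ of those coming from some $\omega_k$ (so $\beta_j\to\infty$); for $\eta_{Y_k}$ small these sets are disjoint, and the distinct finite limits are exactly the $\widetilde J$ distinct elements of $\mathcal{G}_Y^\star$, with exponents tending to the Nakagami analogue of (\ref{losb}). I would then inspect the $\beta_j$-dependence of (\ref{eq:expressionA}) term by term. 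For a term with $r\in\mathcal{J}_{\mathrm{fin}}$, each denominator factor $(\beta_r-\beta_j)^{-b_j}$ with $j\in\mathcal{J}_\infty$ pairs with the numerator factor $(-\beta_j)^{b_j}$ of the leading product, and $\bigl(-\beta_j/(\beta_r-\beta_j)\bigr)^{b_j}\to1$; any partition component $q_h\ge1$ sitting on a $\mathcal{J}_\infty$-coordinate contributes an extra $\beta_j^{-q_h}\to0$ (its Pochhammer factor being bounded), so only partitions supported on the $2N+\widetilde J$ "finite" coordinates survive, which is precisely the collapse of $\wp_{2N+J}^{b_r-1}$ to $\wp_{2N+\widetilde J}^{b_r-1}$. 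For a term with $r\in\mathcal{J}_\infty$, the denominator carries $\beta_r^{-(q_1+1)}\prod_\ell(\beta_r+z\alpha_\ell)^{-a_\ell}$ while the leading product supplies only $\beta_r^{b_r}$, and — since the first part already guarantees a finite total limit — a degree count in $\beta_r$ forces such terms to vanish. What remains is exactly $\Theta$ with $\{b_j,\beta_j\}$ replaced by the quantities built from $\mathcal{G}_Y^\star$, i.e.\ $\widetilde\Theta(z;\mathcal{S}_X,\mathcal{G}_Y)$.

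The main obstacle is the bookkeeping in this second step: one must show, uniformly over the finitely many partitions $\mathbf{q}$, that every diverging root $\beta_j$ either cancels exactly against the leading product or forces a vanishing contribution, and one must cover the borderline case in which $\rho_k\neq\rho_{k'}$ for $\eta_{Y_k}>0$ yet $m_{Y_k}/\Omega_{Y_k}=m_{Y_{k'}}/\Omega_{Y_{k'}}$. In that situation two close roots merge in the limit, and the cleanest way to identify the merged root (with exponent $m_{Y_k}+m_{Y_{k'}}$, consistent with (\ref{losb})) is to invoke the continuity of $\Theta$ in the parameters $\beta_j$, which is legitimate because $\Theta$ represents the CDF $F_{Z^{(\eta)}}$. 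Throughout, the fact that the first part already furnishes existence and finiteness of the limit means that only its identification, not its existence, rests on the delicate estimates.
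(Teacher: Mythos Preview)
Your approach is correct in outline but takes a much harder route than the paper. The paper does not evaluate $\lim_{\eta\to0}\Theta$ term by term at all. It simply observes that when $\eta_{Y_k}\to0$ with $\mu_{Y_k}=m_{Y_k}$ one has $\omega_k\to\infty$ and $\rho_k\to m_{Y_k}/\Omega_{Y_k}$, so the MGF of $\sum_k Y_k$ is directly $\prod_{k=1}^K(1-s/\rho_k)^{-m_{Y_k}}$ with $\rho_k=m_{Y_k}/\Omega_{Y_k}$; this already has the product form (\ref{app4}) required in the proof of Lemma~\ref{expressionA}, just with the coefficient set $\mathcal{S}_Y^\star$ replaced by $\mathcal{G}_Y^\star$. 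The contour-integral and residue computation of Appendix~I can therefore be repeated verbatim from (\ref{app4}) onward, yielding $F_Z=\widetilde{\Theta}$ in one stroke. This bypasses entirely the bookkeeping you identify as the main obstacle: there is no splitting into $\mathcal{J}_{\mathrm{fin}}$ and $\mathcal{J}_\infty$, no degree counts in the diverging $\beta_j$, and no analysis of merging roots. Your weak-convergence argument for the first equality is sound and in fact more careful than the paper, which leaves that equality essentially implicit; but for the second equality the paper's shortcut --- re-running the residue proof with the Nakagami-$m$ MGF rather than taking a limit of the output formula --- is both simpler and far more robust.
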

\begin{proof}
See Appendix II.
\end{proof}

It will be shown in next Section that formal expression (1), after an appropriate redefinition
of its coefficients, can be also applied to certain cases in which
the $\eta$-$\mu$ RVs in the numerator of the quotient in Lemma \ref{expressionA} are statistically correlated.

\section{Outage Probability Analysis}
\label{analysis}

In \mbox{Subsection \ref{analysis1}}
the \mbox{$\eta$-$\mu$/$\eta$-$\mu$} interference-limited scenario with independent MRC branches is analyzed;
several important particular cases are addressed, e.g. the $\eta$-$\mu$/Rayleigh and the Nakagami-$m$/Nakagami-$m$ scenarios.
Finally, the extension of the OP analysis to the spatially correlated MRC case is tackled
in Subsection \ref{analysis2}. A summary of the most relevant results derived in this Section is shown in Table I.

\subsection{Interference-limited scenarios with independent MRC }
\label{analysis1}

Let us assume a general interference-limited scenario where both the signal of
interest (SOI) and the interfering signals experience fading.
The SOI is formed from $N$ independent MRC branches with power envelopes $X_n$ (${n=1,...,N}$)
while
the interferers are received with power envelopes
$Y_k$ (${k=1,...,K}$).
All the received power envelopes are assumed to be mutually independent.
The OP for this general scenario
is defined as
\begin{equation}
\label{eq:OP}
P_{out}  \triangleq \Pr \left[ {\frac{{\sum\limits_{n = 1}^N {X_n } }}
{{\sum\limits_{k = 1}^K {Y_k } }} \leqslant \zeta _o } \right],
\end{equation}
where $\zeta _o$ is a predefined threshold. Lemma \ref{expressionA} and its Corollary are now exploited to
obtain elementary exact closed-form expressions for the OP defined in (\ref{eq:OP}). The connection of these
novel results with the well-known expressions \cite[eq. 10.17]{Simon05} and \cite[eq. 18]{Romero07}
for the \mbox{Nakagami-$m$/Nakagami-$m$} scenario is also discussed.

It is important to notice that the Nakagami-$m$ distribution is obtained
in an exact manner from the $\eta$-$\mu$ distribution in format 1 by setting $\mu=m$ and
$\eta\rightarrow 0$, or alternatively, by setting $\mu=m/2$ and setting
$\eta\rightarrow 1$ \cite[Appendix A]{Yacoub07}.
Since our analysis assumes positive integer values for the $\mu$ parameter of the CCI signals, we must distinguish between the
following two alternative types of expressions for the OP.

\subsubsection{\mbox{Type I, $\eta$-$\mu$/$\eta$-$\mu$} interference-limited scenario}

Let $\mathcal{S}_X\equiv\{\Omega_{X_n},\eta_{X_n},\mu_{X_n}\}_{n=1}^N$ be the set of parameters
for the pre-combining SOI power envelopes, and $\mathcal{S}_Y\equiv\{\Omega_{Y_k},\eta_{Y_k},\mu_{Y_k}\}_{k=1}^K$
the corresponding set of parameters for the CCI power envelopes.
Both $\eta$-$\mu$ parameters are defined according to \mbox{format 1.}
The only restriction in the values of the
problem parameters is that $\mu_{Y_k}$ (${k=1,...,K}$) are assumed to be positive integer numbers.
Then, for this scenario the OP is given by
\begin{equation}
\label{eq:OP1}
P_{out}  = \Theta \left( {\zeta _o ;\mathcal{S}_X ,\mathcal{S}_Y } \right),
\end{equation}
where the elementary function $\Theta$ is defined in Lemma \ref{expressionA}.
Note that expression (\ref{eq:OP1}) includes the \mbox{$\eta$-$\mu$/Nakagami-$m$} scenario,
when the $m$ parameters of the
CCI signals are even positive integers numbers, by setting $\eta_{Y_k}=1$ and $\mu_{Y_k}=m_{Y_k}/2$ for
$k=1,\ldots,K$.
Otherwise, the OP for the \mbox{Nakagami-$m$/$\eta$-$\mu$} scenario is obtained,
for arbitrary values of the corresponding $m$ parameters,
by setting $\eta_{X_n}=1$ and $\mu_{X_n}=m_{X_n}/2$ for $n=1,\ldots,N$; thus,
the OP for the \mbox{Rayleigh/$\eta$-$\mu$},
\mbox{Hoyt/$\eta$-$\mu$} and \mbox{one-sided Gaussian/$\eta$-$\mu$} scenarios are also included in expression
(\ref{eq:OP1}) for arbitrary values of the corresponding statistical parameters of the pre-combining SOI power envelopes.

\subsubsection{\mbox{Type II, $\eta$-$\mu$/Nakagami-$m$} interference-limited scenario}

Let $\mathcal{S}_X\equiv\{\Omega_{X_n},\eta_{X_n},\mu_{X_n}\}_{n=1}^N$ be the set of parameters
for the pre-combining SOI power envelopes $\{X_n\}_{n=1}^N$, and
\mbox{$\mathcal{G}_Y\equiv\{\Omega_{Y_k},m_{Y_k}\}_{k=1}^K$}
the corresponding set of parameters for the CCI power envelopes.
The $\eta$-$\mu$ parameters for the pre-combining SOI power envelopes are defined according to \mbox{format 1.}
The only restriction in the values of the
problem parameters is that $m_{Y_k}$ (${k=1,...,K}$) are assumed to be positive integer numbers.
Then, for this scenario the OP is given by
\begin{equation}
\label{eq:OP2}
P_{out}  = \widetilde{\Theta} \left( {\zeta _o ;\mathcal{S}_X ,\mathcal{G}_Y } \right),
\end{equation}
where the elementary function $\widetilde{\Theta}$ is defined in Corollary \ref{expressionA2}.
Expression (\ref{eq:OP2}) includes the OP for
the \mbox{Nakagami-$m$/Nakagami-$m$} scenario
with arbitrary values for the $m$ parameters corresponding to the pre-combining SOI power envelopes.
This is possible by setting $\eta_{X_n}=1$ and $\mu_{X_n}=m_{X_n}/2$ for $n=1,\ldots,N$; thus,
expression (\ref{eq:OP2}) complements both \cite[eq. 10.17]{Simon05} and \cite[eq. 18]{Romero07}
as long as it is valid if real values of $m$ are assumed for the pre-combining SOI power envelopes,
with the only restriction that the values of $m$ for the interferers are positive integer numbers.
Note that the OP for the $\eta$-$\mu$/Rayleigh interference-limited scenario without restrictions on the
corresponding statistical parameters is also included in (\ref{eq:OP2}), by setting $m_{Y_k}=1$ for
$k=1,\ldots,K$.

\subsection{Extension to spatially correlated MRC}
\label{analysis2}

It is shown in this Section that all scenarios analyzed by the OP expressions (\ref{eq:OP1}) and (\ref{eq:OP2})
can be generalized to scenarios in which spatially correlated MRC is considered for the SOI.
Expression (\ref{eq:OP1}) and (\ref{eq:OP2}) where derived with the only restriction that the $\mu$ parameters
for the CCI signals must be positive integer numbers. To carry out our extension, an additional restriction is required for
the $\mu$ parameters of the pre-combining SOI signals. Specifically, we assume in this Section that
$\mu_{X_n}$ (${n=1,...,N}$) are assumed to be positive integer or half-integer numbers; however, this restriction still allows us
to include in our analysis Nakagami-$m$ fading with integer or \mbox{half-integer $m$} for the pre-combining SOI signals,
 and in particular, Rayleigh, Hoyt and one-sided Gaussian fading.

Very recently, an elegant expression for the moment generating function (MGF) of the received power
$\sum\limits_{n = 1}^N {X_n }$ in correlated MRC under $\eta$-$\mu$ fading was derived in \cite[eq. 12]{Ashgari10}
\begin{equation}
\label{eq:corr1}
{\mathcal{M}}_{\sum\limits_{n = 1}^N {X_n } } \left( s \right) = \prod\limits_{v = 1}^V {\left( {1 - 2\lambda _v^X s} \right)^{ - \frac{{\xi _v^Y }}
{2}} \left( {1 - 2\lambda _v^Y s} \right)^{ - \frac{{\xi _v^X }}
{2}} },
\end{equation}
where $\left\{ {\lambda _v^X } \right\}_{v = 1}^V$ and $\left\{ {\lambda _v^Y } \right\}_{v = 1}^V$
represent distinct eigenvalues, with $\left\{ {\xi _v^X } \right\}_{v = 1}^V$ and
$\left\{ {\xi _v^Y } \right\}_{v = 1}^V$ their corresponding algebraic multiplicities,
of certain covariance matrices defined in \cite[sect. III]{Ashgari10}
which contain the spatial correlation structure of the channel.
Expression (\ref{eq:corr1}) requires that $\mu_{X_n}$ (${n=1,...,N}$) are assumed to be positive
integer or half-integer numbers.

It is shown along the proof of Lemma \ref{expressionA} given in Appendix I that the MGF of the received power
$\sum\limits_{n = 1}^N {X_n }$ under independent MRC can be formally expressed as
\begin{equation}
\label{eq:corr2}
{\mathcal{M}}_{\sum\limits_{n = 1}^N {X_n } } \left( s \right) = \prod\limits_{v = 1}^N {\left( {1 - \frac{s}
{{\alpha _{2v - 1} }}} \right)^{ - a_{2v - 1} } \left( {1 - \frac{s}
{{\alpha _{2v} }}} \right)^{ - a_{2v} } };
\end{equation}
thus, a simple comparison of (\ref{eq:corr1}) and (\ref{eq:corr2}) allows us to infer that the formal replacement
\begin{equation}
\label{eq:corr3}
\begin{gathered}
  V \leftrightarrow N \hfill \\
  \frac{1}
{{2\lambda _v^X }} \leftrightarrow \alpha _{2v - 1}, \hspace{10mm}
  \frac{1}
{{2\lambda _v^Y }} \leftrightarrow \alpha _{2v} \hspace{8mm}\left( {v = 1, \ldots ,V} \right),\hfill \\
  \frac{{\xi _v^X }}
{2} \leftrightarrow a_{2v - 1}, \hspace{14mm}
  \frac{{\xi _v^Y }}
{2} \leftrightarrow a_{2v} \hspace{11mm} \left( {v = 1, \ldots ,V} \right), \hfill \\
\end{gathered}
\end{equation}
yields the following closed-form expression for the OP under spatially correlated MRC
\begin{equation}
\label{eq:corr4}
P_{out}  = \Theta \left( {\zeta_o;\left\{ {a_{2v - 1}  = \frac{{\xi _v^X }}
{2},a_v  = \frac{{\xi _v^Y }}
{2}} \right\}_{v = 1}^V ,\left\{ {\alpha _{2v - 1}  = \frac{1}
{{2\lambda _v^X }},\alpha _v  = \frac{1}
{{2\lambda _v^X }}} \right\}_{v = 1}^V ;{\mathcal{S}}_Y } \right).
\end{equation}
With the following notation
$\Theta \left( {z;\left\{ {a_n } \right\}_{n = 1}^{2N} ,\left\{ {\alpha _n } \right\}_{n = 1}^{2N} ;\mathcal{S}_Y } \right)$
we represent a function that
has the same formal expression given in ($\ref{eq:expressionA}$), with the
set of coefficients $\left\{ {a_n } \right\}_{n = 1}^{2N} $ and $\left\{ {\alpha _n } \right\}_{n = 1}^{2N}$
directly given as inputs parameters (they are not calculated by (\ref{parame1})) and the remainder
coefficients which appear in (\ref{eq:expressionA}) are calculated as in Lemma \ref{expressionA}.
Note that expression (\ref{eq:corr4}) provides an extension of the type I expression given in (\ref{eq:OP1})
for spatially correlated MRC, with the additional assumption that $\mu_{X_n}$ (${n=1,...,N}$)
are assumed to be positive integer or half-integer numbers. In connection with the type II expression (\ref{eq:OP2}) we can
also obtain
\begin{equation}
\label{eq:corr5}
P_{out}  = \widetilde{\Theta} \left( {\zeta_o;\left\{ {a_{2v - 1}  = \frac{{\xi _v^X }}
{2},a_v  = \frac{{\xi _v^Y }}
{2}} \right\}_{v = 1}^V ,\left\{ {\alpha _{2v - 1}  = \frac{1}
{{2\lambda _v^X }},\alpha _v  = \frac{1}
{{2\lambda _v^X }}} \right\}_{v = 1}^V ;\mathcal{G}_Y } \right),
\end{equation}
where $\widetilde{\Theta} \left( {z;\left\{ {a_n } \right\}_{n = 1}^{2N} ,\left\{ {\alpha _n } \right\}_{n = 1}^{2N} ;\mathcal{G}_Y } \right)$
represents a function that
has the same formal expression given in ($\ref{eq:expressionA}$), with the
set of coefficients $\left\{ {a_n } \right\}_{n = 1}^{2N} $ and $\left\{ {\alpha _n } \right\}_{n = 1}^{2N}$
directly given as inputs parameters (they are not calculated by (\ref{parame1})) and the remainder
coefficients which appear in (\ref{eq:expressionA}) are calculated as in Corollary \ref{expressionA2}.
Again, the additional assumption that $\mu_{X_n}$ (${n=1,...,N}$) are positive
integer or half-integer numbers is required.

\section{Numerical Results}
\label{resultados}

Fig. 1 shows some numerical results obtained by the type I OP expression
given in (\ref{eq:OP1}) for \mbox{$\zeta_o=10$.}
In this first example, the SOI is formed from three $\eta$-$\mu$ power envelopes ($N=3$) with set of parameters
$\mathcal{S}_X\equiv\{\Omega_{X_n},\eta_{X_n},\mu_{X_n}\}_{n=1}^3=\{\{\Omega,2.6,\mu\},\{0.8\Omega,3.4,\mu\},\{0.7\Omega,1.7,\mu\}\}$;
and three $\eta$-$\mu$ interferers ($K=3$) with
$\mathcal{S}_Y\equiv\{\Omega_{Y_k},\eta_{Y_k},\mu_{Y_k}\}_{k=1}^3=\{\{1,3.3,2\},\{1,3.3,2\},\{0.5,1.7,1\}\}$
are considered.
Different plots are shown in Fig. 1 in terms of the average signal-to-interference ratio (SIR),
which is $\Omega$ in this example, for different values of the parameter $\mu$. Fig. 2 plots the OP
for some scenarios included in the type II expression (\ref{eq:OP2}) with $\zeta_o=10$.
In this second example,
the SOI is formed from two $\eta$-$\mu$ power envelopes ($N=2$) with set of parameters
$\mathcal{S}_X\equiv\{\Omega_{X_n},\eta_{X_n},\mu_{X_n}\}_{n=1}^2
=\{\{2\Omega,1,0.5\},\{0.7\Omega,0.6,2\}\}$;
and four Nakagami-$m$ interferers \mbox{($K=4$)} with
$\mathcal{G}_Y\equiv\{\Omega_{Y_k},m_{Y_k}\}_{k=1}^4=\{\{1,m\},\{1,m\},\{0.5,m\},\{0.2,m\}\}$
are considered. Again, OP curves are shown in Fig. 1 in terms of the average SIR,
which is equal to $\Omega$, for different values of the parameter $m$.

Simulation results are also superimposed in both figures.

\section{Conclusions}
\label{conclusiones}

The master formula (\ref{eq:expressionA}) is derived in this
paper, which by appropriate setting of its formal parameters
allows us to obtain a variety of elementary closed-form
expressions for the outage probability in
$\eta$-$\mu$/$\eta$-$\mu$ interference limited scenarios.

%

\appendices

\section{Proof of Lemma I}

Let $\L[f(x);x,s]$ or simply $\L[f(x);s]$ represent the Laplace transform of $f(x)$ defined as
\begin{equation}
\label{app1}
\L[f\left( x \right);s] \triangleq \int_0^\infty  {e^{ - sx} f\left( x \right)dx}.
\end{equation}
If $f(x)$ is the probability density function (PDF) of a RV $X$, then its MGF $\M_X (s)$ is
defined as $\M_X (s)\triangleq \L[f\left( x \right);-s]$.

According to \cite[eq. 6]{Ermolova10}, the MGF for the sum of independent but arbitrarily
distributed $\eta$-$\mu$ RVs $\{X_n\}_{n=1}^N$, defined in format 1, can be expressed as
\begin{equation}
\label{app2}
\M_{\sum\limits_{n = 1}^N {X_n } } \left( s \right) = \prod\limits_{n = 1}^N {\left( {1 - \frac{s}
{{\alpha _{2n - 1} }}} \right)^{ - a_{2n - 1} } \left( {1 - \frac{s}
{{\alpha _{2n} }}} \right)^{ - a_{2n} } },
\end{equation}
where the sets of coefficients
$\left\{ {a_\ell  } \right\}_{\ell  = 1}^{2N}$ and $\left\{ {\alpha _\ell  } \right\}_{\ell  = 1}^{2N}$
are defined from $\mathcal{S}_X$
as specified in (\ref{parame1}). Both sets of coefficients are formed by positive real numbers which
are not necessarily distinct. The same idea allows us to express the MGF of the sum of the
$\eta$-$\mu$ RVs $\{Y_k\}_{k=1}^K$ as
\begin{equation}
\label{app3}
\M_{\sum\limits_{k = 1}^K {Y_k } } \left( s \right) = \prod\limits_{k = 1}^K {\left( {1 - \frac{s}
{{\omega _k }}} \right)^{ - \mu _{Y_k } } \left( {1 - \frac{s}
{{\rho _k }}} \right)^{ - \mu _{Y_k } } },
\end{equation}
where the set of intermediate coefficients
$\mathcal{S}_Y^\star\equiv
\left\{ {\omega _k ,\rho _k } \right\}_{k = 1}^K$
is defined according to (\ref{inter}).
By hypothesis, the set of parameters $\{{\mu}_{Y_k}\}_{k=0}^K$ are positive integers; thus, expression
(\ref{app3}) can be rearranged as
\begin{equation}
\label{app4}
\M_{\sum\limits_{k = 1}^K {Y_k } } \left( s \right) = \prod\limits_{j = 1}^J {\left( {1 - \frac{s}
{{\beta _j }}} \right)^{ - b_j } },
\end{equation}
where $\left\{ {\beta _j } \right\}_{j = 1}^J$ is the set of different values in the set
of intermediate coefficients $\mathcal{S}_Y^\star$, $J$ is the number of different values in $\mathcal{S}_Y^\star$
and $\left\{ b _j  \right\}_{j = 1}^J$ is the set of positive integers defined by
(\ref{losb}) representing the multiplicities corresponding to $\left\{ {\beta _j } \right\}_{j = 1}^J$.

The CDF of $Z ={{\sum\limits_{n = 1}^N {X_n } }}/{{\sum\limits_{k = 1}^K {Y_k } }}$ can be expressed through the
MGFs of $\sum\limits_{n = 1}^N {Y_n }$ and $\sum\limits_{k = 1}^K {Y_k }$ by appealing to the well-known
convolution and scaling properties of the Laplace transform, i.e.
\begin{equation}
\label{app5}
\begin{gathered}
  F_Z \left( z \right) = \Pr \left[ {\sum\limits_{n = 1}^N {X_n }  \leqslant z\sum\limits_{k = 1}^K {Y_k } } \right] =  \hfill \\
  \quad \quad \int_0^\infty  {F_{\sum\limits_{n = 1}^N {X_n } } \left( {zy} \right)f_{\sum\limits_{k = 1}^K {Y_k } } \left( y \right)dy}  = \left. {\mathcal{L}\left[ {F_{\sum\limits_{n = 1}^N {X_n } } \left( {zy} \right)f_{\sum\limits_{k = 1}^K {Y_k } } \left( y \right);y,s} \right]} \right|_{s = 0}  =  \hfill \\
  \quad \quad \frac{1}
{{2\pi i}}\int\limits_{c - i\infty }^{c + i\infty } {\mathcal{L}\left[ {F_{\sum\limits_{n = 1}^N {X_n } } \left( {zy} \right);p} \right]}
 \mathcal{L}\left[ {f_{\sum\limits_{k = 1}^K {Y_k } } \left( x \right); - p} \right]dp =  \hfill \\
  \quad \quad \frac{1}
{{2\pi i}}\int\limits_{c - i\infty }^{c + i\infty } {\frac{1}
{z}\mathcal{L}\left[ {F_{\sum\limits_{n = 1}^N {X_n } } \left( y \right);y,\frac{p}
{z}} \right]} \mathcal{L}\left[ {f_{\sum\limits_{k = 1}^K {Y_k } } \left( y \right);y, - p} \right]dz =  \hfill \\
  \quad \quad \frac{1}
{{2\pi i}}\int\limits_{c - i\infty }^{c + i\infty } {\Xi \left( p \right)} dp, \hfill \\
\end{gathered}
\end{equation}
where the integration kernel
\begin{equation}
\label{app6}
\Xi \left( p \right) \triangleq \frac{1}
{p}\mathcal{M}_{\sum\limits_{n = 1}^N {X_n } } \left( { - \frac{p}
{z}} \right)\mathcal{M}_{\sum\limits_{k = 1}^K {Y_k } } \left( p \right),
\end{equation}
$i$ is the imaginary unit and $c$ an appropriate real number which splits the singularities of the involved MGFs.
Considering (\ref{app2}) and (\ref{app4}), the integration kernel in our case is expressed as
\begin{equation}
\label{app7}
\Xi \left( p \right) = \frac{1}
{p}\prod\limits_{\ell  = 1}^{2N} {\left( {1 + \frac{p}
{{z\alpha _\ell  }}} \right)} ^{ - a_\ell  } \prod\limits_{j = 1}^J {\left( {1 - \frac{p}
{{\beta _j }}} \right)} ^{ - b_j }.
\end{equation}
Let $\left\{ {\hat{\alpha}_\ell  } \right\}_{\ell  = 1}^{2N}$ and $\left\{ {\hat{\beta}_j } \right\}_{j = 1}^J$
denote the complex-modulus ordered sets corresponding to $\left\{ {\alpha_\ell  } \right\}_{\ell  = 1}^{2N}$
and $\left\{ {{\beta}_j } \right\}_{j = 1}^J$ respectively, i.e.
$0 \leqslant \left| {\hat \alpha _1 } \right| \leqslant \left| {\hat \alpha _2 } \right| \leqslant  \ldots  \leqslant \left| {\hat \alpha _{2N} } \right|$
and
$0 < \left| {\hat \beta _1 } \right| < \left| {\hat \beta _2 } \right| <  \ldots  < \left| {\hat \beta _J } \right|$;
then, the singularity structure of the integration kernel $\Xi(p)$ and the integration paths involved
in the computation of the CDF of $Z$ are shown in Fig. \ref{fig3}.
Since $
\left| {\left. {\Xi \left( p \right)} \right|_{\wp _\infty  } } \right| \leqslant \frac{C}
{R}\prod\limits_{\ell  = 1}^{2N} {\left( {\frac{R}
{{z\alpha _\ell  }}} \right)} ^{ - a_\ell  } \prod\limits_{j = 1}^J {\left( {\frac{R}
{{\beta _j }}} \right)} ^{ - b_j }
$ for a sufficiently large $R$ and an appropriate constant $C$, then
\begin{equation}
\label{app8}
\mathop {\lim }\limits_{R \to \infty } \left| {\int_{\wp _\infty  } {\Xi \left( p \right)dp} } \right| \leqslant
\mathop {\lim }\limits_{R \to \infty } 2\pi R\left| {\left. {\Xi \left( p \right)} \right|_{\wp _\infty  } } \right| = 0.
\end{equation}
Thus, the Cauchy-Goursat and the Residue Theorems allows us to express the CDF of $Z$ as follows
\begin{equation}
\label{app9}
\begin{gathered}
  F_Z (z) = \frac{1}
{{2\pi i}}\int\limits_{\varepsilon  - i\infty }^{\varepsilon  + i\infty } {\Xi \left( p \right)} dp = \frac{1}
{{2\pi i}}\sum\limits_{j = 1}^J {\int_{\mathcal{C}\left( {\beta _j } \right)} {\Xi \left( p \right)dp} }  - \frac{1}
{{2\pi i}}\mathop {\lim }\limits_{R \to \infty } \int_{\wp _\infty  } {\Xi \left( p \right)dp}  =  - \sum\limits_{r = 1}^J {\text{Res}\left[ {\Xi ;\beta _r } \right]}  \hfill \\
  \quad  =  - \left( {\prod\limits_{j = 1}^J {\left( { - \beta _j } \right)^{b_j } } } \right)\left( {\prod\limits_{\ell  = 1}^{2N} {\left( {z\alpha _\ell  } \right)^{a_\ell  } } } \right)\sum\limits_{r = 1}^J {\frac{1}
{{(b_r  - 1)!}}\left. {\frac{{d^{b_r  - 1} }}
{{dp^{b_r  - 1} }}} \right|_{\beta _r } \left( {\frac{1}
{p}\prod\limits_{\ell  = 1}^{2N} {\left( {p + z\alpha _\ell  } \right)^{ - a_\ell  } } \prod\limits_{\begin{subarray}{l}
  j = 1 \\
  j \ne r
\end{subarray}} ^J {\left( {p_r  - \beta _j } \right)^{ - b_j } } } \right)},  \hfill \\
\end{gathered}
\end{equation}
where $0 < \varepsilon  < \left| {\hat \beta _1 } \right|$ and
${\text{Res}\left[ {\Xi;\beta _r } \right]}$ represents the residue of $\Xi(p)$ at
the pole $p=\beta _r$. Finally, taking into account
the Leibniz derivative rule for products and that
\begin{equation}
\label{app10}
\frac{1}
{{q!}}\frac{{d^q }}
{{dx^q }}\left( {x + a} \right)^{ - m}  = \frac{{( - 1)^q \left( m \right)_q }}
{{\left( 1 \right)_q \left( {x + a} \right)^{m + q} }},
\end{equation}
the desired expression (\ref{eq:expressionA})
is obtained after simple algebraic manipulations.


\section{Proof of Corollary I}

It is clear from (\ref{inter}) that ${\omega _k\rightarrow \infty }$
and ${\rho _k\rightarrow\frac{{m_{Y_k} }}{{\Omega _{Y_k } }} }$ when $\eta_{Y_k}\rightarrow 0 $
and $\mu_{Y_k}=m_{Y_k} $ for $k=1,\ldots,K$; thus, in this case
\begin{equation}
\label{app11}
\mathcal{M}_{\sum\limits_{j = 1}^K {Y_k } } \left( s \right) = \prod\limits_{k = 1}^K {\left( {1 - \frac{s}
{{\rho _k }}} \right)^{ - m_{Y_k } } },
\end{equation}
and the same formal proof given in Lemma I can be repeated
from expression (\ref{app4}) by replacing ${\mathcal{S}_Y^{\star}}$ with ${\mathcal{G}_Y^{\star}}$.

\newpage

\begin{table}[t!]
\caption{Summary of Some Results Derived in Section \ref{analysis}.}
\begin{center}
\small{\begin{tabular}{c | c | c | c | c | c | c | c | c }

\hline\hline
\multicolumn{3}{ c }{Scenario} & \multicolumn{3}{ c }{Assumptions on the Statistical Parameters}
& \multicolumn{3}{ c }{OP Expression}\\

\hline
\hline
\multicolumn{3}{c |}{
\begin{minipage}{3 cm}
\vspace{2mm}
\begin{center}
$\eta$-$\mu$/$\eta$-$\mu$\\
Independent MRC
\end{center}
\vspace{0mm}
\end{minipage}
}
& \multicolumn{3}{| c |}{
\begin{minipage}{10 cm}
\vspace{2mm}
\begin{itemize}
\item The $\mu$ parameters of the CCI signals are positive integers.
\vspace{2mm}
\end{itemize}
\end{minipage}
}
& \multicolumn{3}{| c }{(\ref{eq:OP1})}\\
\hline
\multicolumn{3}{c |}{
\begin{minipage}{3 cm}
\vspace{2mm}
\begin{center}
Nakagami-$m$/$\eta$-$\mu$\\
Independent MRC
\end{center}
\vspace{0mm}
\end{minipage}
}
& \multicolumn{3}{| c |}{
\begin{minipage}{10 cm}
\vspace{2mm}
\begin{itemize}
\item The $\mu$ parameters of the CCI signals are positive integers.
\vspace{2mm}
\end{itemize}
\end{minipage}
}
& \multicolumn{3}{| c }{(\ref{eq:OP1})}\\
\hline
\multicolumn{3}{c |}{
\begin{minipage}{3 cm}
\vspace{2mm}
\begin{center}
$\eta$-$\mu$/Nakagami-$m$\\
Independent MRC
\end{center}
\vspace{0mm}
\end{minipage}
}
& \multicolumn{3}{| c |}{
\begin{minipage}{10 cm}
\vspace{2mm}
\begin{itemize}
\item The $m$ parameters of the CCI signals are positive integers.
\vspace{2mm}
\end{itemize}
\end{minipage}
}
& \multicolumn{3}{| c }{(\ref{eq:OP2})}\\
\hline
\multicolumn{3}{c |}{
\begin{minipage}{3 cm}
\vspace{2mm}
\begin{center}
Nakagami-$m$/\\/Nakagami-$m$\\
Independent MRC
\end{center}
\vspace{0mm}
\end{minipage}
}
& \multicolumn{3}{| c |}{
\begin{minipage}{10 cm}
\vspace{2mm}
\begin{itemize}
\item The $m$ parameters of the CCI signals are positive integers.
\vspace{2mm}
\end{itemize}
\end{minipage}
}
& \multicolumn{3}{| c }{(\ref{eq:OP2})}\\
\hline
\multicolumn{3}{c |}{
\begin{minipage}{3 cm}
\vspace{2mm}
\begin{center}
$\eta$-$\mu$/Rayleigh\\
Independent MRC
\end{center}
\vspace{0mm}
\end{minipage}
}
& \multicolumn{3}{| c |}{
\begin{minipage}{10 cm}
\vspace{2mm}
\begin{itemize}
\item None.
\vspace{2mm}
\end{itemize}
\end{minipage}
}
& \multicolumn{3}{| c }{(\ref{eq:OP2})}\\
\hline
\multicolumn{3}{c |}{
\begin{minipage}{3 cm}
\vspace{2mm}
\begin{center}
$\eta$-$\mu$/$\eta$-$\mu$\\
Correlated MRC
\end{center}
\vspace{0mm}
\end{minipage}
}
& \multicolumn{3}{| c |}{
\begin{minipage}{10 cm}
\vspace{2mm}
\begin{itemize}
\item The $\mu$ parameters of the pre-combining SOI signals are positive integers or half-integers.
\item The $\mu$ parameters of the CCI signals are positive integers.
\vspace{2mm}
\end{itemize}
\end{minipage}
}
& \multicolumn{3}{| c }{(\ref{eq:corr4})}\\
\hline
\multicolumn{3}{c |}{
\begin{minipage}{3 cm}
\vspace{2mm}
\begin{center}
Nakagami-$m$/\\/Nakagami-$m$\\
Correlated MRC
\end{center}
\vspace{0mm}
\end{minipage}
}
& \multicolumn{3}{| c |}{
\begin{minipage}{10 cm}
\vspace{2mm}
\begin{itemize}
\item The $m$ parameters of the pre-combining SOI signals are positive integers or half-integers.
\item The $m$ parameters of the CCI signals are positive integers.
\vspace{2mm}
\end{itemize}
\end{minipage}
}
& \multicolumn{3}{| c }{(\ref{eq:corr5})}\\
\hline
\multicolumn{3}{c |}{
\begin{minipage}{3 cm}
\vspace{2mm}
\begin{center}
$\eta$-$\mu$/Rayleigh\\
Correlated MRC
\end{center}
\vspace{0mm}
\end{minipage}
}
& \multicolumn{3}{| c |}{
\begin{minipage}{10 cm}
\vspace{2mm}
\begin{itemize}
\item The $\mu$ parameters of the pre-combining SOI signals are positive integers or half-integers.
\vspace{2mm}
\end{itemize}
\end{minipage}
}
& \multicolumn{3}{| c }{(\ref{eq:corr5})}\\

\hline\hline

\end{tabular}}
\small{\begin{tabular}{l}
\end{tabular}}
\end{center}
\end{table}

\newpage

\begin{figure}
\centering \vspace*{45mm}
\includegraphics[width=14cm]{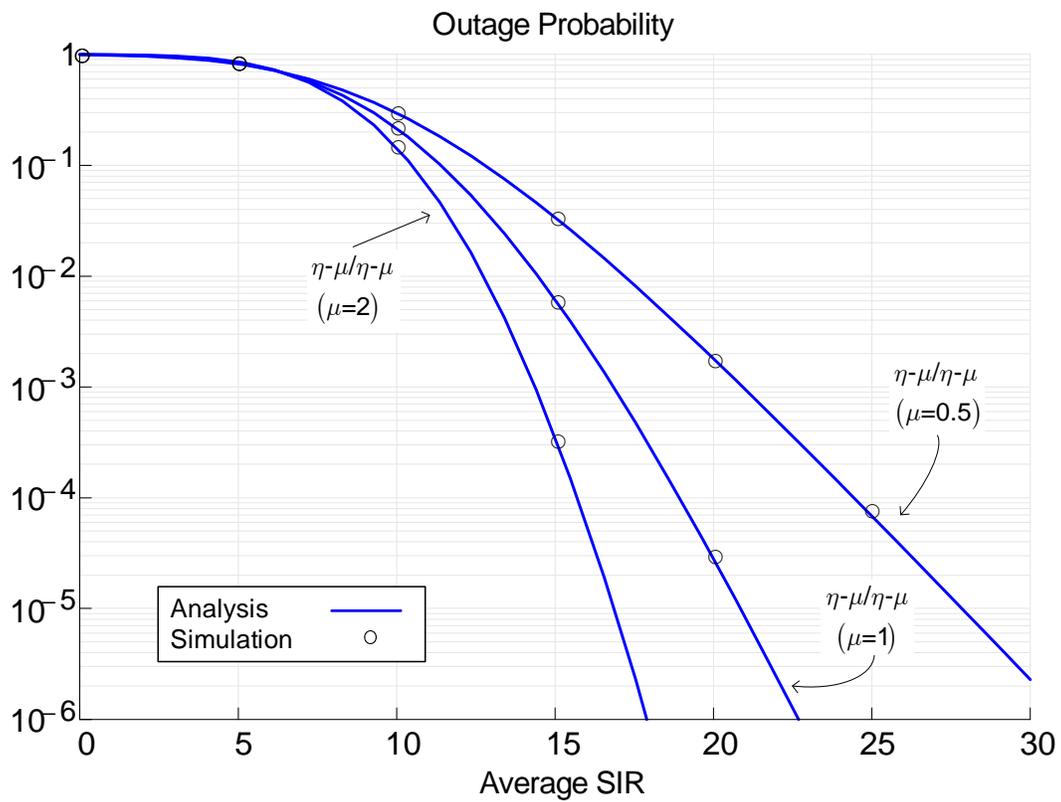}
\begin{center}
\begin{minipage}{\figtab}
\caption{ \footnotesize Type I OP expression (\ref{eq:OP1}) versus average SIR $\Omega$ for different values of $\mu$,
where
$\mathcal{S}_X\equiv\{\Omega_{X_n},\eta_{X_n},\mu_{X_n}\}_{n=1}^3=\{\{\Omega,2.6,\mu\},\{0.8\Omega,3.4,\mu\},\{0.7\Omega,1.7,\mu\}\}$,
$\mathcal{S}_Y\equiv\{\Omega_{Y_k},\eta_{Y_k},\mu_{Y_k}\}_{k=1}^3=\{\{1,3.3,2\},\{1,3.3,2\},\{0.5,1.7,1\}\}$
and $\zeta_o=10$.
}
  \label{fig1}
\end{minipage} \figspace
\end{center}
\end{figure}

\newpage

\begin{figure}
\centering \vspace*{45mm}
\includegraphics[width=14cm]{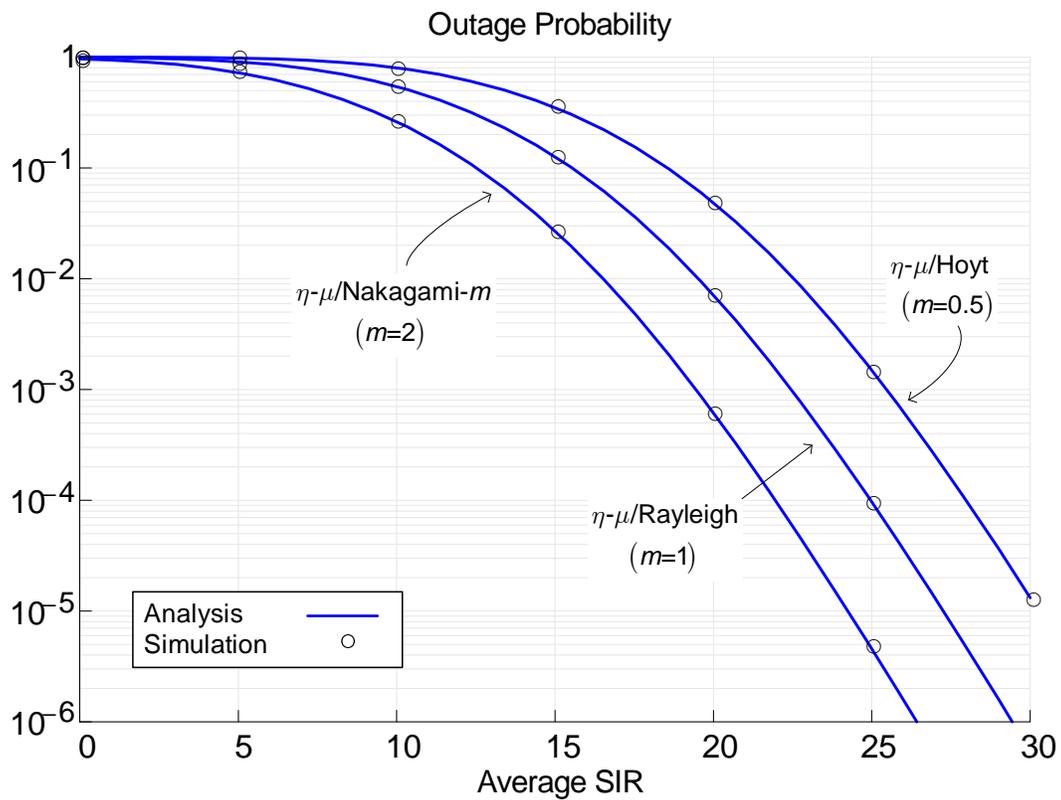}
\begin{center}
\begin{minipage}{\figtab}
\caption{ \footnotesize Type II OP expression (\ref{eq:OP2}) versus average SIR $\Omega$ for different values of $m$,
where
$\mathcal{S}_X\equiv\{\Omega_{X_n},\eta_{X_n},\mu_{X_n}\}_{n=1}^2
=\{\{2\Omega,1,0.5\},\{0.7\Omega,0.6,2\}\}$,
$\mathcal{G}_Y\equiv\{\Omega_{Y_k},m_{Y_k}\}_{k=1}^4=\{\{1,m\},\{1,m\},\{0.5,m\},\{0.2,m\}\}$
and $\zeta_o=10$. }
  \label{fig2}
\end{minipage} \figspace
\end{center}
\end{figure}

\newpage

\begin{figure}
\centering \vspace*{45mm}
\includegraphics[width=14cm]{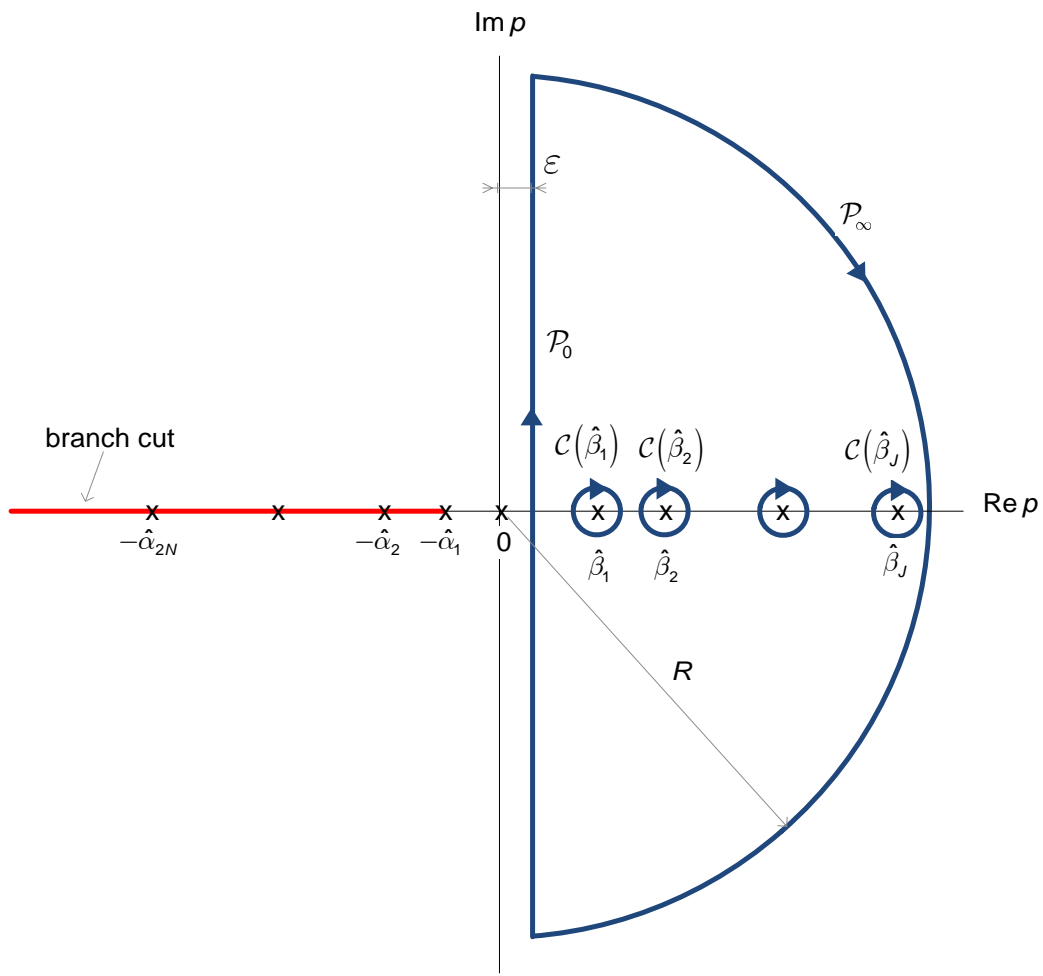}
\begin{center}
\begin{minipage}{\figtab}
\caption{ \footnotesize Singularity structure of the integration kernel $\Xi(p)$ and integration paths involved in the proof of Lemma 1. }
  \label{fig3}
\end{minipage} \figspace
\end{center}
\end{figure}

\end{document}